\documentclass[twoside]{article}
\usepackage[a4paper]{geometry}
\usepackage{palatino}
\usepackage[utf8]{inputenc}
\usepackage[T1]{fontenc} 
\usepackage{RR}
\usepackage{hyperref}
\usepackage{color}
\usepackage{amsfonts}
\usepackage{amsmath}
\usepackage{amsthm}
\usepackage{amssymb}
\usepackage{mathrsfs}
\usepackage{mathalfa}
\usepackage{empheq}

\newcommand{\dudx}[2]
{\displaystyle{\frac{\partial #1}{\partial #2}}}
\newcommand{\DtPerp}[1]
{\displaystyle{\frac{D^\perp}{Dt}#1}}
\newcommand{\gradPar}[1]
{\displaystyle{\nabla_{/\!/} #1}}
\newcommand{\vc}[1]{\mbox{$\boldsymbol{#1}$}} 
\definecolor{grispale}{gray}{0.75}

\newcommand{\grandR}{I\!\!R}

\newcommand{\vecx}{{\rm \bf e_x}}
\newcommand{\vecy}{{\rm \bf e_y}}
\newcommand{\vecz}{{\rm \bf e_z}}

\bibliographystyle{plain}
\RRNo{8715}
\RRdate{April 2015}
\RRauthor{
Herv\'e Guillard\thanks{herve.guillard@inria.fr}}
\authorhead{H.Guillard}
\RRtitle{Th\'eorie math\'ematique des mod\`eles de MHD r\'eduite pour les plasmas de fusion}
\RRetitle{The mathematical theory of reduced MHD models for fusion plasmas}
\titlehead{Mathematical theory of reduced MHD models}
\RRresume{L'établissement de modèles de MHD réduite est formulé comme un exemple de la 
théorie générale des limites singulières des systèmes hyperboliques. Cette formulation permet 
d'utiliser les résultats généraux de cette théorie et de prouver rigoureusement que les modèles 
de MHD réduite sont une approximation valide du modèle complet. En particulier, 
la convergence des solutions du modèle complet vers les solutions d'un système réduit est 
démontrée. 
}
\RRabstract{The derivation of reduced MHD models for fusion plasma is here 
formulated as a special instance of the general theory of singular limit of 
hyperbolic system of PDEs with large operator. This formulation allows to use the 
general results of this theory and to prove rigorously that reduced MHD models are 
valid approximations of the full MHD equations. In particular, it is proven that 
the solutions of the full MHD system converge to the solutions of an appropriate 
reduced model.}
\RRmotcle{Analyse asymptotique, systèmes hyperboliques, limite singulière, MHD, Plasmas de fusion, 
Tokamaks}

\RRkeyword{Asymptotic analysis, hyperbolic systems, singular limit, MHD, Fusion plasma, 
Tokamaks}
 \RRprojet{Castor}  
\RCSophia 

\begin{document}
\makeRR   
\section{Introduction} \label{Intro}
Magnetohydrodynamics (MHD) is a macroscopic theory describing electrically conducting 
fluids. It addresses laboratory as well as astrophysical plasmas and 
therefore is extensively used in very different contexts. One of these contexts concerns 
the study of  
fusion plasmas in tokamak machines. A tokamak is a toroidal device in which hydrogen 
isotopes in the form of a plasma reaching a temperature of the order of the 
hundred of millions of Kelvins 
is confined thanks to a very strong applied magnetic field. Tokamaks are used to study 
controlled fusion and are considered as one of the most promising concepts to 
produce fusion energy in the near future. However a hot plasma as the one 
present in a tokamak 
is subject to a very large number of instabilities that can lead to the 
end of the existence of the plasma. An important goal of MHD 
studies in tokamaks is therefore to determine the stability domain that constraints 
the operational range of the machines. 
A secondary goal of these studies is to evaluate the consequences of these possible 
instabilities in term of heat loads and stresses on the plasma facing components of the 
machines. 
Numerical simulations using the MHD models are therefore of uttermost importance in this field 
and therefore the design of MHD models and of models beyond the standard one 
(e.g incorporating two-fluid or kinetic effects) is the subject of an 
intense activity. \\
However, the MHD model is a very 
complex one : it contains 8 independent variables, three velocity components, 
three components of the magnetic field, density and pressure. Although the system is 
hyperbolic, it is not strictly hyperbolic leading to the existence of possible resonance 
between waves of different types and moreover the MHD system has the additional complexity 
of being endowed with an involution. An
involution in the sense of conservation law systems is an additional equation that
if satisfied at $t=0$ is satisfied for all $t>0$ \cite{DafermosInvolution}. 
For all these reasons, approximations and simplified models have been designed 
both for theoretical studies as well as numerical ones. In the field of fusion plasmas, 
these models are denoted as {\em reduced MHD models}\footnote{while the standard MHD model is 
by contrast designated as the {\em full} MHD model}. These models initially 
proposed in the 70' \cite{Strauss76}) have been 
progressively refined to 
include more and more physical effects and corrections \cite{Strauss77, Carreras-1981, 
Schmalz198114, Izzo-etal-1983}. In particular, some earlier models conserve a non-standard energy 
and in some modeling works, special attention have been paid to insure the conservation of the usual 
energy e.g \cite{Strauss1997, Drake_Antonsen1984, Kruger_etal98} 
(see also \cite{franck_etal2015}). At present the literature  on the physics of fusion plasma 
concerned by reduced MHD models is huge and contains several hundred of references. 
From a numerical point of view, several well-known numerical codes (e.g 
\cite{0741-3335-54-12-124047}, \cite{Czarny20087423} ) used  routinely for fusion 
plasma studies are based on these reduced models. Actually, while there is a definite tendency 
in the fusion plasma community to use full MHD models e.g \cite{NimrodFE2004}, 
\cite{M3DC1-51647842}, \cite{Haverkort2013},
a large majority of non-linear simulations of 
tokamak plasmas have been and still are conducted with these approximations. \\

Until recently, reduced MHD models have not attracted a lot of interest in the 
mathematical or numerical analysis literature. One can cite \cite{MZA:8489301} and 
\cite{franck_etal2015} that have shown that these models can be interpreted as some special 
case of ``Galerkin'' methods where the velocity and magnetic fields are constrained to 
belong to some lower dimensional space. This interpretation is also implicit in the 
design of the M3D-C1 code \cite{M3DC1-51647842} where instead of the usual projection 
on the coordinate system axis, the equations governing the scalar components of the vector fields 
are obtained by special projections that allow to recover reduced models. \\

In this work, we adopt the different point of view of asymptotic analysis and show that 
reduced MHD models can be understood as a special instance of the general theory of 
singular limit of hyperbolic system of PDEs with large constant operators. This formulation allows to use the 
general results of this theory and to prove rigorously the validity of these 
approximations of the MHD equations. In particular, it is proven here, we believe for the 
first time, that the solutions of the full MHD system converge to the solutions of an 
appropriate reduced model.

This paper is organized as follows : First, we recall the general theory of 
singular limits of quasi-linear hyperbolic system with a large parameter. 
In the third section, we show how this 
general framework can be used to analyze reduced MHD models. Finally, we conclude by some 
remarks on possible extensions of the present work. 

\section{Singular limit of hyperbolic PDEs}\label{sec:singularLimit}
\subsection{General framework}\label{general framework}
In this section, we are concerned with the behavior when $\varepsilon \rightarrow 0$ of the solutions of hyperbolic system of 
PDEs of the following form : 
\begin{equation}\label{eq:Hsystem}
\left\{\begin{array}{l}
A_0(\vc{W},\varepsilon) \partial_t \vc{W} + \sum_j A_j (\vc{W},\varepsilon)\partial_{x_j} 
\vc{W}+  \dfrac{1}{\varepsilon} \sum_j
C_j\partial_{x_j} \vc{W} =0  \\
\vc{W}(0,\vc{x},\varepsilon) = \vc{W}_0(\vc{x},\varepsilon) 
\end{array}\right.
\end{equation} 
Here $\vc{W} \in \mathcal{S} \subset \grandR^N$ is a vector function 
depending of $(t,x_j;j=1,\cdots,d)$ where $d$ is the space dimension while the $A_0, A_j, C_j$ are square $N \times N$ matrices. Due to the presence of the large coefficient 
$1/\varepsilon$ multiplying the operator $\sum_j C_j\partial_{x_j}(.)$,  we may expect 
 the velocity of some waves present in (\ref{eq:Hsystem}) to become infinite and 
 therefore, for a solution to exist on a ${\cal{O}}(1)$ time scale, 
 it has to be close in some sense to the kernel 
 $ K=\{ \vc{W} \in \grandR^N s.t \sum_j C_j\partial_{x_j} \vc{W}  = 0 \}$ 
 of the large operator. The limit system 
obtained from (\ref{eq:Hsystem}) is therefore a {\em singular} limit since the constraint 
$\vc{W} \in K $ may change the hyperbolic nature of the system (\ref{eq:Hsystem}). 
A prototypical example of this behavior is given by the incompressible limit of the 
{\em hyperbolic} equations governing compressible Euler flows where the propagation at 
infinite speed of the acoustic waves gives rise to an {\em elliptic} equation on the pressure coming 
from the global constraint $\nabla\cdot \vc{u} =0$. \\

The nature of the singular limit depends on the initial data. Using the terminology of 
Schochet \cite{S_Schochet_1994} the limit is called ``slow'' if the initial data makes 
the first time derivatives at time $t=0$ stay bounded as $\varepsilon \rightarrow 0$. 
The term ``well-prepared initial data'' is also used to qualify this situation. 
In this case, under appropriate assumptions, the solutions exist for a time $T$ independent of $\varepsilon$ and converge
to the solutions of a limit system when $\varepsilon \rightarrow 0$.\\

In the opposite case, denoted as a ``fast'' singular limit, $\partial_t \vc{W}$ is not 
${\cal{O}}(1)$ at time zero and fast oscillations developing on a $1/\varepsilon$ time 
scale can persist on the long time scale. Solutions of fast singular limit cannot converge 
as $\varepsilon \rightarrow 0$ in the usual sense since the time derivative of the solution 
is of order $1/\varepsilon$. In this case, convergence means the existence of an 
``averaged'' limit profile ${\cal{W}}(t,\tau,\vc{x})$ such that 
$\vc{W}(t,\vc{x},\varepsilon) - {\cal{W}}(t,t/\varepsilon,\vc{x}) \rightarrow 0$ with 
$\varepsilon$. The question of the existence of fast singular limit is in particular examined 
in \cite{S_Schochet_1994}. A review article summarizing results on this subject 
with a special emphasis on the low Mach number limit is \cite{Alazard_2008}. \\

In this work, we will be mainly concerned by 
the slow case. Even in this case, the existence for a time independent of $\varepsilon$ and 
the convergence of the solutions to the solutions of a 
limit system may require additional assumptions on the structure of (\ref{eq:Hsystem}). 
Beginning with the earlier works in the 80' of Klainerman and Majda
\cite{S_Klainerman_A_Majda_1981,S_Klainerman_A_Majda_1982, Majda_84} and those of 
Kreiss and his co-workers \cite{HO_Kreiss_1980,G_Browning_HO_Kreiss_1982}, 
these questions have 
been examined in several works \cite{S_Schochet_1986,S_Schochet_1988} with the main 
objective to justify the passage to the incompressible limit in low Mach number compressible 
flows. Several extensions of these works for viscous flows or general 
hyperbolic-parabolic systems are also available. Again we can refer to \cite{Alazard_2008} 
for a review.  

The following theorem (see \cite{Majda_84}, chapter 2)  summarizes the 
main results of 
these works in a form suitable for our purposes : \\
\newtheorem{T1}{Theorem 1}
\begin{T1}
 
Assume that :
\begin{enumerate}
\item {Conditions on the initial data : } 
$W_0(\vc{x},\varepsilon) = W_0^0(\vc{x})+ \varepsilon W_0^1(\vc{x},\varepsilon)$
     \begin{enumerate}
      \item $\vc{W}_0^0(\vc{x})$ and $\vc{W}_0^1(\vc{x},\varepsilon)$ are in $H^s$
      \item $\sum_j C_j \partial_j \vc{W}_0^0 = 0$
      \item $||\vc{W}_0^1(\vc{x},\varepsilon)||_s \le $ Cte
     \end{enumerate}
 \item {Structure of the system} 
       \begin{enumerate}
          \item The matrices $A_0,A_j$ and $C_j$ are symmetric 
          \item $A_0$ is positive definite at least in a neighborhood of the initial data
          \item $A_0$ and $A_j$ are $C^s$ continuous for some $s \ge [n/2]+2$, where $n$
          is the number of spatial dimensions
          \item The $C_j$ are constant matrices
          \item The matrix $A_0(\vc{W},\varepsilon) = A_0 (\varepsilon \vc{W})$
       \end{enumerate}

\end{enumerate}
then the solution $\vc{W}(t,\vc{x},\varepsilon)$ of system (\ref{eq:Hsystem}) with the 
initial data satisfying condition 1 is unique and exists for a time $T$ independent of
$\varepsilon$. In addition the solutions $\vc{W}(t,\vc{x},\varepsilon)$ satisfy :
\begin{displaymath}
 ||\vc{W}(t,\vc{x},\varepsilon) - \vc{W}^0(t,\vc{x})||_{s-1} \le C\varepsilon 
 \mathrm{~for~} t \in [0,T]
\end{displaymath}
where $\vc{W}^0(t,\vc{x})$ is the solution of the reduced system : 
\begin{equation}\label{eq:limitsystemG}
\left\{\begin{array}{l}
A_0(0) \partial_t \vc{W^0} + \sum_j A_j (\vc{W}^0, 0)\partial_{x_j} \vc{W^0}+  
\sum_j C_j\partial_{x_j} \vc{W^1} =0  \\
\sum_j C_j\partial_{x_j} \vc{W^0}=0\\
\vc{W^0}(0,\vc{x}) = \vc{W}_0^0(\vc{x})
\end{array}\right.
\end{equation} 
\end{T1}
\begin{proof}: The proof of this result can be found in \cite{Majda_84}, chapter 2, 
Theorems 2.3 and 2.4. 
We do not repeat this proof 
here but briefly comment on some of their aspects\nobreak : The assumptions 2.(a) et  2.(b)  
simply means that system (\ref{eq:Hsystem}) is a quasi-linear symmetric 
hyperbolic system in the sense of  Friedrichs. The uniqueness and existence of solution on a 
finite time $T >0 $ can 
then be established by classical iteration techniques relying on energy estimates 
(see for instance \cite{Lax_73} or \cite{Majda_84}). 
However the presence of the large 
coefficient $1/\varepsilon$ could possibly make this time of existence 
$\varepsilon$-dependent and shrinking to 0 with $\varepsilon$. 
Assumption 2.(d) ensures that this will not be the case since the 
matrices $C_j$ being constant,  the 
large terms will not contribute to the energy estimates. \\

\noindent The assumption 2.(e) $A_0 = A_0 (\varepsilon \vc{W})$ allows to bound its time derivative
independently of $\varepsilon$ : Since we have 
 $$
  \partial_t A_0(\varepsilon \vc{W}) = \dfrac{DA_0}{D\vc{W}} ~\varepsilon \partial_t {\vc{W}}
  = -\dfrac{DA_0}{D\vc{W}} ~ \varepsilon A_0^{-1} [A_j \partial_j {\vc{W}}+ 
 \dfrac{1}{\varepsilon}  \partial_j C_j {\vc{W}}]$$
 The $\varepsilon$ and $1/\varepsilon$ terms balance together and give an estimate independent 
 of $\varepsilon$.\\
 

The assumptions 1.(b) and 1.(c) means that the initial condition is sufficiently close to the 
kernel of the large operator to ensure that the time derivative 
$\partial_t \vc{W}(0,\vc{x},\varepsilon)$ is bounded in $H^{s-1}$ independently of 
$\varepsilon$. This condition implies that the initial data are ``well-prepared'' and 
will not generate fast oscillations on a $1/\varepsilon$ time scale. 
\end{proof}
\subsection{Reduced limit system}\label{sec:reduced}
Even if (\ref{eq:limitsystemG}) provides a complete description of the behavior of the 
solutions of the original system as $\varepsilon$ tends to $0$, the limit system 
contains as many unknowns as the original one. Actually, one may even consider that it contains 
more unknowns as the first order correction $\vc{W}^1$ have also to be computed. In practice, this largely 
depends on the specific system considered as some lines of the matrices $A_j (\vc{W}^0, 0)$ 
may be identically zero and/or the evaluation of some terms of the first order correction
can be completely obvious. However, it can be interesting 
to derive from (\ref{eq:limitsystemG}) a ``reduced'' set of equations containing less unknowns 
by eliminating the first order correction
$\vc{W}^1$. A particularly pleasant framework to 
construct such a reduced system is the following : \\

Assume that the kernel 
$K=\{ \vc{W} \in \grandR^N s.t \sum_j C_j\partial_{x_j} \vc{W}  = 0 \}$ 
have dimension $n<N$ and can be parametrized 
by a linear operator with constant coefficients such that :
\begin{displaymath}
 \forall \vc{W} \in K \subset \grandR^N, ~~~~~\exists \vc{\omega} \in S \subset \grandR^n, 
 ~~~~\vc{W} = {\cal{M}}(\vc{\omega}) 
\end{displaymath}

Since the operator $\mathbb{L}=\sum_j C_j\partial_{x_j} $ have constant coefficients, 
${\cal{M}}(\vc{\omega})$ is also a differential operator of order 1 with constant coefficients that can be written : 
\begin{equation}\label{eq:maxwellian}
  {\cal{M}}(\vc{\omega}) = (\sum_{j=1}^d P_j \partial_{x_j} + P_0) \vc{\omega}
\end{equation}
where the matrices $\{P_j;j=0,d\}$ are rectangular $N\times n$ constant matrices. 
Then consider the adjoint operator ${\cal{M}}^*$ from $\grandR^N$ to $\grandR^n$ satisfying 
\begin{displaymath}
  ({\cal{M}}(\vc{\omega}),\vc{W}) = (\vc{\omega},{\cal{M}}^*\vc{W})
\end{displaymath}
The operator ${\cal{M}}^*$ is an  ``annhilator'' for the $\mathbb{L}$ operator in the sense that 
\begin{displaymath}
  {\cal{M}}^*\mathbb{L} = 0
\end{displaymath}
Indeed we have : 
\begin{displaymath}
  ({\cal{M}}^*\mathbb{L}\vc{W}, \vc{\omega})= (\mathbb{L}\vc{W}, {\cal{M}}\vc{\omega})=
  -(\vc{W}, \mathbb{L}{\cal{M}}\vc{\omega})= 0
\end{displaymath}
since the $C_j$ being symmetric matrices, $\mathbb{L}$ is a skew-symmetric operator. \\
From (\ref{eq:maxwellian}) ${\cal{M}}^*$ has the explicit expression : 
\begin{equation}\label{eq:Amaxwellian}
  {\cal{M}}^*(\vc{W}) = - \sum_{j=1}^d P_j^t \partial_{x_j}\vc{W}  + P_0^t \vc{W}
\end{equation}
where $P_j^t;j=0,\cdots,d$ are rectangular $n\times N$ matrices, transposes of the 
$P_j$. \\

With the operators ${\cal{M}}$ and ${\cal{A}}={\cal{M}}^*$ at hand, a reduced system of 
equations can be obtained by   
left multiplying (\ref{eq:limitsystemG}) by the annhilator ${\cal{A}}$ for functions  
$\vc{W}={\cal{M}}(\vc{\omega})$. In this operation, 
the first-order correction 
$\sum_j C_j\partial_{x_j} \vc{W^1}$ vanishes and we obtain with 
$\vc{W} = {\cal{M}}(\vc{\omega})$ : 
\begin{equation}\label{eq:limitsystemForOmega}
\left\{\begin{array}{l}
{\cal{A}} A_0(0) {\cal{M}} ~\partial_t \vc{\omega} + 
\sum_j {\cal{A}} A_j ({\cal{M}}(\vc{\omega}), 0) {\cal{M}}~\partial_{x_j}  \vc{\omega} = 0 \\
\vc{\omega}(0,\vc{x}) = \vc{\omega}^0(\vc{x})
\end{array}\right.
\end{equation} 
that is an autonomous system for the reduced variable $\vc{\omega} \in \grandR^n$. Note that 
to obtain (\ref{eq:limitsystemForOmega}), we have used the fact that ${\cal{M}}$ being 
a linear differential operator defined by constant matrices $P_j$, it commutes with the 
time and spatial derivatives. \\
Note also that spatial derivatives are ``hidden'' in the 
definition of the operators ${\cal{A}}$ and ${\cal{M}}$. Therefore in contrast with the equations 
(\ref{eq:limitsystemG}) that is a first-order differential system,
~~(\ref{eq:limitsystemForOmega}) 
defines a {\em{third-order}} differential system of equations (see section \ref{sec:slowlimit} 
for the  concrete example of reduced MHD system). 
The choice of using (\ref{eq:limitsystemForOmega}) instead of 
(\ref{eq:limitsystemG}) as a basis for a numerical method is therefore problem dependent 
and in practice (\ref{eq:limitsystemForOmega}) can be more difficult to approximate than 
the original limit system. 
\section{Application to reduced MHD}\label{sec:Application to reduced MHD}

\subsection{The ideal MHD system}
We now proceed to show how this general framework can be applied to the MHD equations and 
begin to recall some basic facts about this system. In the sequel, we will make the assumption 
that the flow is barotropic, that is the pressure is only a function of the density. 
This assumptions includes isentropic as well as isothermal flows. \\
The ideal MHD system can be written under many different forms. Since the general theory 
we have described make use of the symmetry of the jacobian matrices, we use here a symmetric form 
of the system : 
 \begin{displaymath}\label{eq:MHDsym}
 \refstepcounter{equation}
\begin{array}{lll}
~~\rho \dfrac{D}{Dt} \vc{u} + \nabla (p + \vc{B}^2/2) - (\vc{B}.\nabla) \vc{B} &= 0 
& (\theequation.1)\\
\\
~~~~~\dfrac{D}{Dt} \vc{B} - (\vc{B}.\nabla) \vc{u}  + \vc{B} \nabla.\vc{u} &= 0
& (\theequation.2)\\
\\
\dfrac{1}{\gamma p} \dfrac{D}{Dt}p +  \nabla.\vc{u} &= 0 & (\theequation.3)
\end{array}
\end{displaymath}
In these equation, $\vc{u}$ is the velocity, $\vc{B}$ the 
magnetic field and $p$ is the pressure. The density $\rho$ is related to the 
pressure by a state law $\rho=\rho(p)$, for instance the perfect gas
state law that writes $\rho = A (p/s)^{1/\gamma}$ where $A$ and $\gamma$ are 
constant and $s$ is the (here constant) entropy. The notation $D./{Dt}$ stands for the material derivative that is 
defined by $D\cdot/{Dt} = \partial_t \cdot + (\vc{u}.\nabla)\cdot$. \\
To system (\ref{eq:MHDsym}) one must add the involution : 
\begin{displaymath}\label{eq:divB}
\refstepcounter{equation}
  \nabla\cdot\vc{B} = 0 ~~~~~~~~~~~~~~~~~~~~~~~~~~~~~~~(\theequation)
\end{displaymath}
and it is easily checked that if (\ref{eq:divB}) is verified at $t=0$, 
it is verified for all ~~$t>0$.\\

The system (\ref{eq:MHDsym}) is hyperbolic,  its Jacobian has real eigenvalues and a complete
set of eigenvectors. However, it is not a strictly hyperbolic system since
some eigenvalues may coincide. Apart from waves moving with the material velocity, 
it is usual to split the set of MHD eigenvalues and associated waves into three groups, 
that are defined as : 

\begin{displaymath}\label{eq:MHDEigenvalues}
 \refstepcounter{equation}
\begin{array}{lll}
 \text{Fast Magnetosonic waves :} & &\\
 \lambda_F^\pm = \vc{u}.\vc{n} \pm  C_F & \text{with~~} 
 C_F^2 = \dfrac{1}{2}(V_t^2 + v_A^2 + \sqrt{(V_t^2 + v_A^2)^2-4 V_t^2 C_A^2}~)& 
 ~~~(\theequation.1)\\
 \\
 \text{Alfen waves :} & &\\ 
 \lambda_A^\pm = \vc{u}.\vc{n} \pm  C_A & \text{with~~}
 C_A^2 = (\vc{B}.\vc{n})^2/\rho&~~~(\theequation.2)\\
 \\
 \text{Slow Magnetosonic waves :} & &\\
 \lambda_S^\pm = \vc{u}.\vc{n} \pm  C_S & \text{with~~}
 C_S^2 = \dfrac{1}{2}(V_t^2 + v_A^2 - \sqrt{(V_t^2 + v_A^2)^2-4 V_t^2 C_A^2})&
 ~~~(\theequation.3)
 \end{array}
\end{displaymath}
where $v_A$ and $V_t$ are defined by : 
$ v_A^2 = |\vc{B}|^2/\rho$ and $V_t^2=\gamma p/\rho $.\\

The velocity of these waves is ordered as follows :
$$
\lambda_S^2 \le  \lambda_A^2 \le \lambda_F^2
$$
Fast and slow Magnetosonic waves are the equivalent of acoustic waves in fluid dynamics. 
Alfen waves (sometimes also called shear Alfen waves) are of a different nature : The expression 
(\ref{eq:MHDEigenvalues}.2) shows that they do not propagate in the direction orthogonal to the 
the magnetic field. Actually in the direction orthogonal to the magnetic field, the speed of 
propagation of Alfen and slow magnetosonic waves is zero (in a frame moving with the material velocity) 
and only the fast magnetosonic waves survive. 


\subsection{Large aspect ratio theory}
\subsubsection{Geometry and coordinate system}
In this section, we are concerned with the model of the ``straight tokamak'' that consists of a 
slender torus characterized by a small aspect ratio $\varepsilon = a/R_0$ (see figure 
\ref{fig:StraighTokamak}). In this model, the torus is approximated by 
a periodic cylinder of length $2 \pi R_0$ and of section of radius $a$. Some of the 
dynamical effects that occur in a tokamak are well represented in this way and this model 
have been extensively used in theoretical studies to understand tokamak dynamics. 
In particular, it is the model considered in \cite{Strauss76} to derive his original 
reduced model.
\begin{center}
 \begin{figure}[h]
 \begin{center}
 \includegraphics[width=0.75\textwidth]{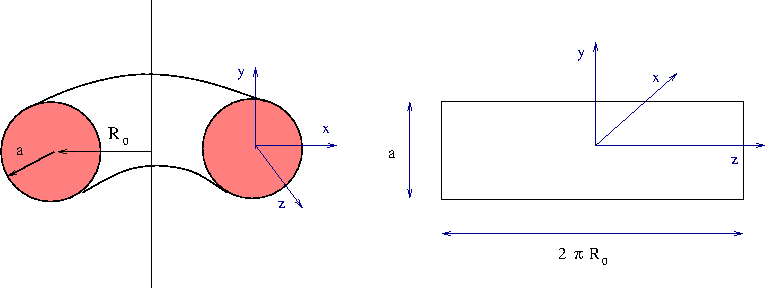}
\caption{Straight tokamak model : the slender torus is unfold to form a periodic cylinder
\label{fig:StraighTokamak}}
\end{center}
 \end{figure}
\end{center}

Now, let $(\xi, \eta, \zeta)$ be the usual cartesian coordinate system and 
let us introduce the normalized variables : 
\begin{displaymath}
\left\lbrace
\begin{array}{l}
x=\xi/a\\
y = \eta/a \\
z=\zeta/R_0
\end{array}
\right.
\end{displaymath} 
In a way consistent with the underlying physical problem, the $z$ direction will be 
denoted as the toroidal direction while the planes $(x,y)$ are the poloidal sections. 
Note also that the introduction of the normalized coordinates $(x,y,z)$ 
corresponds actually to a {\em two scale} analysis : $z$ the toroidal 
coordinate is scaled with $R_0$ while the poloidal coordinates $(x,y)$ are scaled with the 
small radius $a$. \\
With these normalized coordinates, the expression of the spatial operators 
becomes : 
\begin{displaymath}\label{eq:OperatorExpression}
 \refstepcounter{equation}
\begin{array}{llr}
a \nabla f &= \dudx{f}{x}\vecx + \dudx{f}{y}\vecy  + 
{\varepsilon}\dudx{f}{z}\vecz &(\theequation.1)\\
\\
a \nabla \bullet \vc{v} &= \nabla_\perp \bullet \vc{v}_\perp + 
\varepsilon \dudx{\vc{v}_z}{z}&(\theequation.2)\\
\\
a \nabla \times \vc{v} 
&= (\vecz\bullet\nabla_\perp \times \vc{v}_\perp) \vecz + 
\nabla_\perp \vc{v}_z\times \vecz +
\varepsilon (-\vecx\dudx{\vc{v}_y}{z} +\dudx{\vc{v}_x}{z}\vecy)\\
&\\
&= \partial_y \vc{v}_z \vecx - \partial_x \vc{v}_z\vecy
+(\partial_x \vc{v}_y - \partial_y \vc{v}_x)\vecz + \varepsilon(-\partial_z{\vc{v}_y}\vecx
+\partial_z{\vc{v}_x}\vecy)&(\theequation.3)\\
\end{array}
\end{displaymath}
with the definitions : 
\begin{displaymath}
\begin{array}{ll}
\vc{v}=\vc{v}_\perp + \vc{v}_z \vecz & \vc{v}_\perp = \vc{v}_x \vecx+\vc{v}_y \vecy\\
\nabla_\perp f = \dudx{f}{x}\vecx+\dudx{f}{y}\vecy&
\nabla_\perp \bullet \vc{v}_\perp = \dudx{\vc{v}_x}{x}+\dudx{\vc{v}_y}{y}
\end{array}
\end{displaymath}
\subsubsection{Scaling}
We now proceed to scale the unknown variables. To recast the equations into an useful form, the usual procedure is to write 
them in dimensionless form by scaling every
variable by a characteristic value. Here, in an equivalent manner, we will consider the
following change of variables:
 \begin{displaymath}
\label{eq:ScalingAssumption}
 \refstepcounter{equation}
 \begin{array}{lll}
 \text{Magnetic field :} &
 \vc{B}= \dfrac{F}{R}\vecz + \vc{B}_P= {B_0}(\vecz 
 + \varepsilon {\mathcal{B}})& ~~(\theequation.1)\\
 \\
 \text{Pressure :} &
 p = P_0 (\bar{p} + \varepsilon q) & ~~(\theequation.2)\\
  \\
 \text{Velocity :} &
 \vc{u} = \varepsilon v_A \vc{v}& ~~(\theequation.3)\\
  \\
 \text{Time :} &
 t  = \dfrac{a}{\varepsilon v_A} \tau& ~~(\theequation.4)\\
 \end{array}
 \end{displaymath}
 where $B_0$ is the reference value of the toroidal magnetic field on the magnetic axis ($R=R_0$) 
 and $\bar{p}$ is a constant. In these expressions, $v_A$ is the Alfen 
 speed defined by $v_A^2 = B_0^2/\rho_0$ 
 where $\rho_0$ is some reference density (for instance, a characteristic value of 
 the density on the magnetic axis). We choose for simplicity $P_0 = \rho_0 v_A^2$ 
 (this only affects the value of the constant $\bar{p}$). 
 The important assumptions made in (\ref{eq:ScalingAssumption}) are : \\
 \\
 i) that the toroidal magnetic field dominates the flow and that the poloidal field is of order 
 $\varepsilon$ with respect to the toroidal field : 
 $\vc{B}= \vc{B}_T + \varepsilon\vc{B}_P$. 
 In tokamaks, the toroidal field is mainly due to external coils and it varies typically
 as 
 $\vc{B}_T= \dfrac{F}{R}\vecz$ where $F$ is approximately a constant and $R$ is the distance to 
 the rotation axis of the torus. In the model of the ``straight tokamak'' and in the limit of 
 small aspect ratio $a/R_0$, this leads to the following expansion of the magnetic field : 
 $$\vc{B}= \dfrac{F}{R}\vecz + \varepsilon\vc{B}_P = \dfrac{F_0}{R_0(1+\varepsilon x)}\vecz + 
 \dfrac{F-F_0}{R_0(1+\varepsilon x)}\vecz 
 + \varepsilon\vc{B}_P = B_0 (\vecz + \varepsilon {\mathcal{B}}) $$
 where $B_0 = F_0/R_0$ is the value of the toroidal magnetic field on the magnetic axis. 
Note that ${\mathcal{B}}$ contains a toroidal component. This component is assumed 
 to be of the same order than the magnetic poloidal field. \\
 \\
 ii) that the pressure fluctuations are also of order $\varepsilon$. Since the 
 poloidal magnetic field is of order $\varepsilon$ with respect to the toroidal 
 field, this means that the poloidal plasma $\beta$ parameter is of order 1. 
 In the physical literature, this situation is referred to as a ``high'' $\beta$ ordering 
 \cite{Strauss77}.\\\\
 iii) that the velocities are small with respect to the Alfen speed. 
 Strictly speaking this assumption needs only to be done for the perpendicular 
 velocity. We adopt it for the full velocity vector in order to simplify the 
 presentation. \\\\
 iv) that we are interested in the long time behavior. Actually, the assumption 
 (\ref{eq:ScalingAssumption}) means 
 that we are interested in the long time behavior of the system with respect to 
 the Alfen time $a/v_A$ that represent the typical time for a magnetosonic wave to cross the 
 tokamak section. (see section \ref{sec:fastlimit} for some remarks on the short time 
 behavior of the system on the fast scale $a/v_A$). \\
 
 Introducing these expression into the MHD system, we get : 
 \begin{displaymath}\label{scaledMHD}
 \refstepcounter{equation}
\begin{array}{ll}
%
\colorbox{grispale}{$\rho (\bar{p} + \varepsilon q)$}[\dudx{}{\tau}\vc{v} + (\vc{v}_\perp\cdot\nabla_\perp) \vc{v} ]
+ \partial_z(q+{\mathcal{B}}_z)\vecz + \nabla_\perp{\mathcal{B}}^2/2 
 -\partial_z {\mathcal{B}} - ({\mathcal{B}}_\perp\cdot\nabla_\perp){\mathcal{B}}\\
+ \varepsilon (\rho v_z\partial_z \vc{v} + \partial_z({\mathcal{B}}^2/2) \vecz
-{\mathcal{B}}_z\partial_z {\mathcal{B}}) +
\colorbox{grispale}{$\dfrac{1}{\varepsilon}\nabla_\perp (q+{\mathcal{B}}_z)$}
=0 & (\theequation.2)\\\\
\dudx{}{\tau}{{\mathcal{B}}_\perp} + (\vc{v}_\perp\cdot\nabla_\perp){\mathcal{B}}_\perp -
({\mathcal{B}}_\perp\cdot\nabla_\perp)\vc{v}_\perp + 
{\mathcal{B}}_\perp \nabla_\perp\cdot \vc{v}_\perp - 
\partial_z \vc{v}_\perp \\ + 
\varepsilon (\vc{v}_z\partial_z {\mathcal{B}}_\perp-{\mathcal{B}}_z\partial_z \vc{v}_\perp 
+\partial_z \vc{v}_z {\mathcal{B}}_\perp )= 0
& (\theequation.3)\\
\\
\dudx{}{\tau}{{\mathcal{B}}_z} + (\vc{v}_\perp\cdot\nabla_\perp){\mathcal{B}}_z
-
({\mathcal{B}}_\perp\cdot\nabla_\perp)\vc{v}_z + 
{\mathcal{B}}_z \nabla_\perp\cdot \vc{v}_\perp +
\varepsilon \vc{v}_z\partial_z {\mathcal{B}}_z
+ \colorbox{grispale}{$\dfrac{1}{\varepsilon}
\nabla_\perp\cdot \vc{v}_\perp$ }= 0 & (\theequation.4)
\\\\
\colorbox{grispale}{$\dfrac{1}{\gamma(\bar{p} + \varepsilon q)}$}
[\dudx{}{\tau}{q} +(\vc{v}_\perp.\nabla_\perp) q +  
{\varepsilon} \vc{v}_z \partial_z {q}]
+ \partial_z \vc{v}_z 
+ \colorbox{grispale}{$\dfrac{1}{\varepsilon} \nabla_{\!\!\perp} \cdot
\vc{v}_\perp $}= 0& (\theequation.5)\\
\end{array}
\end{displaymath} 

\newpage
If one introduces the variable $\vc{W}=(\vc{v}_x,\vc{v}_y,\vc{v}_z,
{\mathcal{B}}_x,{\mathcal{B}}_y,{\mathcal{B}}_z,q)^t$, the previous system can be written as 
\begin{equation}\label{eq:matricialSymetricform}
A_0(\varepsilon \vc{W}) \partial_\tau \vc{W} + \sum_j A_j (\vc{W}, \varepsilon \vc{W})\partial_{x_j} \vc{W}+ \frac{1}{\varepsilon} 
\sum_j C_j\partial_{x_j} \vc{W} =0 
\end{equation} 
where the matrices $A_0,A_j(\vc{W},\varepsilon \vc{W})$ are defined by : 
\begin{footnotesize}
\begin{displaymath}A_0=
\left(\begin{array}{ccc}
 \rho I_3& 0_3 & 0 \\
  0_3     & I_3 & 0 \\
  0_3     & 0_3 & \dfrac{1}{\gamma(\bar{p} + \varepsilon q)}\\ 
\end{array}\right)
\end{displaymath}
\begin{displaymath}A_x=
\left(\begin{array}{ccccccc}
  \rho \vc{v}_x & 0 & 0& 0 &  {\mathcal{B}}_y& {\mathcal{B}}_z & 0 \\
  0 & \rho \vc{v}_x & 0 & 0 & -{\mathcal{B}}_x& 0               & 0 \\
  0 & 0 & \rho \vc{v}_x & 0 & 0               & -{\mathcal{B}}_x& 0 \\
  0 & 0 & 0 & \vc{v}_x & 0 & 0 & 0 \\
  {\mathcal{B}}_y & -{\mathcal{B}}_x & 0& 0 & \vc{v}_x& 0 & 0 \\
  {\mathcal{B}}_z & 0 & -{\mathcal{B}}_x & 0 & 0 & \vc{v}_x & 0 \\
  0 & 0 & 0 & 0 & 0 & 0 & \dfrac{\vc{v}_x}{\gamma(\bar{p} + \varepsilon q)}\\ 
\end{array}\right)
\end{displaymath}
\begin{displaymath}
A_y=
\left(\begin{array}{ccccccc}
 \rho \vc{v}_y & 0 & 0 & -{\mathcal{B}}_y& 0 & 0               & 0 \\
 0 & \rho \vc{v}_y& 0 & {\mathcal{B}}_x & 0 & {\mathcal{B}}_z & 0 \\
 0 & 0 & \rho \vc{v}_y& 0               & 0 & -{\mathcal{B}}_y& 0 \\
 -{\mathcal{B}}_y & {\mathcal{B}}_x& 0& \vc{v}_y& 0 & 0 & 0\\
 0 & 0 & 0 & 0 & \vc{v}_y & 0 & 0 \\
 0 &  {\mathcal{B}}_z& -{\mathcal{B}}_y& 0 & 0 & \vc{v}_y & 0\\
 0 & 0 & 0 & 0 & 0 & 0 & \dfrac{\vc{v}_y}{\gamma(\bar{p} + \varepsilon q)}\\ 
\end{array}\right)
\end{displaymath}
\begin{displaymath}A_z=
\left(\begin{array}{ccccccc}
  \varepsilon \rho \vc{v}_z & 0 & 0 & -{(1+\varepsilon {\mathcal{B}}_z)} & 0  & 0& 0 \\
  0 & \varepsilon \rho \vc{v}_z & 0 & 0 & -{(1+\varepsilon {\mathcal{B}}_z)}  & 0& 0 \\
  0 & 0 & \varepsilon \rho \vc{v}_z & {\varepsilon {\mathcal{B}}_x}& {\varepsilon {\mathcal{B}}_y}& 0& 1 \\
  - {(1+\varepsilon {\mathcal{B}}_z)} & 0 
 & {\varepsilon{\mathcal{B}}_x}& 
 \varepsilon\vc{v}_z & 0 & 0 & 0 \\
 0 & -({1+\varepsilon{\mathcal{B}}_z}) & {\varepsilon{\mathcal{B}}_y}
 & 0 &\varepsilon\vc{v}_z  & 0 & 0 \\
  0 & 0 & 0 & 0 & 0 & \varepsilon\vc{v}_z & 0 \\
 0 & 0 & 1 & 0 & 0 & 0 & \varepsilon \dfrac{\vc{v}_z}{\gamma(\bar{p} + \varepsilon q)}\\  
\end{array}\right)
\end{displaymath}
\end{footnotesize}
while the constant matrices $C_j$ are given by :
\begin{footnotesize}
\begin{displaymath}
C_x= 
\left(\begin{array}{ccccccc}
0 &0 &0 &0 &0 & 1 & 1 \\
0 &0 &0 &0 &0 &0 & 0 \\
0 &0 &0 &0 &0 &0 & 0 \\
0 &0 &0 &0 &0 &0 & 0 \\
0 &0 &0 &0 &0 &0 & 0 \\
1 & 0 &0 &0 &0 &0 & 0 \\ 
1 & 0 &0 &0 &0 &0 & 0 \\ 
\end{array}\right)
C_y= 
\left(\begin{array}{cccccccc}
0 &0 &0 &0 &0 & 0 & 0 \\
0 &0 &0 &0 &0 & 1 & 1 \\
0 &0 &0 &0 &0 &0 & 0 \\
0 &0 &0 &0 &0 &0 & 0 \\
0 &0 &0 &0 &0 &0 & 0 \\
0 & 1 & 0 &0 &0 &0 & 0 \\ 
0 & 1 & 0 &0 &0 &0 & 0 \\ 
\end{array}\right)
\end{displaymath}
\end{footnotesize}
This form makes apparent that the ideal MHD system can be put under the general form studied in 
section \ref{general framework}. Therefore, the general results obtained in this section 
can be applied and we have 
\newtheorem{RMHD}{Theorem 2}
\begin{RMHD}
Assume that the initial velocity, magnetic field and pressure 
are defined by : 
\begin{displaymath}
\left\lbrace\begin{array}{l}
\vc{u}(0,\vc{x})/V_A=\varepsilon(\vc{v}^0(\vc{x})+ \varepsilon \vc{v}^1(\varepsilon,\vc{x}))\\
\vc{B}(0,\vc{x})/B_0= \vc{e}_z + \varepsilon({\mathcal{B}}^0(\vc{x})
+ \varepsilon {\mathcal{B}}^1(\varepsilon,\vc{x}))\\
p(0,\vc{x})/p_0= \bar{p} + \varepsilon(q^0(\vc{x})+ \varepsilon q^1(\varepsilon,\vc{x}))\\
\end{array}\right.
\end{displaymath}
where $\bar{p}$ is a constant, the functions $\vc{v}^0, {\mathcal{B}}^0,q^0$ and 
$ \vc{v}^1, {\mathcal{B}}^1,q^1$ are bounded in $H^s$ and where the $0$-th order initial data 
verifies : 
\begin{displaymath} \label{eq:Theo2Initialdata}
\refstepcounter{equation}
\left\lbrace\begin{array}{ll}
\nabla_\perp .\vc{v}^0(\vc{x})=0 &~~~~~~~~~~(\theequation.1)\\
\exists f(z) \mathrm{~such~that~} {\mathcal{B}}_z^0(\vc{x})=f(z)-q^0(\vc{x})
&~~~~~~~~~~(\theequation.2)
\end{array}\right.
\end{displaymath}
then the solution of the full MHD system (\ref{eq:MHDsym}) 
exists for a time $T$ independent of
$\varepsilon$ and this solution converges in $H^{s-1}$ to the solution of the reduced system 
given below in section \ref{sec:slowlimit}. \\
\end{RMHD}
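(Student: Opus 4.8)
The plan is to recognise the scaled system (\ref{eq:matricialSymetricform}) as a concrete instance of the abstract system (\ref{eq:Hsystem}) and to check, one by one, that it satisfies every hypothesis of Theorem 1; the statement then follows by invoking that theorem and undoing the change of variables (\ref{eq:ScalingAssumption}). Here the unknown is $\vc{W}=(\vc{v}_x,\vc{v}_y,\vc{v}_z,\mathcal{B}_x,\mathcal{B}_y,\mathcal{B}_z,q)^t\in\grandR^7$, the number of spatial dimensions is $n=3$, and the large operator is $\tfrac1\varepsilon(C_x\partial_x+C_y\partial_y)$; note that no $C_z$ appears, since in the straight--tokamak scaling the toroidal coordinate is measured with $R_0$ rather than with $a$, so $\partial_z$ never carries a factor $1/\varepsilon$.

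First I would verify the structural assumptions 2.(a)--(e). Symmetry of $A_0,A_x,A_y,A_z$ and of $C_x,C_y$ (assumption 2.(a)) and the fact that the $C_j$ are constant (2.(d)) are read off directly from the explicit matrices. For 2.(b), $A_0=\mathrm{diag}\big(\rho I_3,\,I_3,\,(\gamma(\bar p+\varepsilon q))^{-1}\big)$ is positive definite wherever $\rho>0$ and $\bar p+\varepsilon q>0$; since $\bar p$ is a fixed positive constant and $q$ is bounded, this holds for $\varepsilon$ small, in particular on a neighbourhood of the initial data. For 2.(c), all entries are smooth functions of $\vc{W}$ --- polynomial in its components, together with the smooth state law $\rho=\rho(p)$ and rational terms whose denominator $\gamma(\bar p+\varepsilon q)$ stays bounded away from zero --- hence $C^s$ for every $s$, and one takes $s\ge[3/2]+2=3$. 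The load-bearing point is 2.(e): one must check that $A_0$ depends on $\vc{W}$ only through $\varepsilon\vc{W}$, and this is exactly what the pressure scaling (\ref{eq:ScalingAssumption}.2) was arranged to secure --- the $\mathcal{O}(1)$ part of the pressure is the \emph{constant} $\bar p$, so both $\rho=\rho\big(P_0(\bar p+\varepsilon q)\big)$ and $(\gamma(\bar p+\varepsilon q))^{-1}$ are functions of $\varepsilon q$ alone, the identity blocks being constant. Thus $A_0(\vc{W},\varepsilon)=A_0(\varepsilon\vc{W})$, and, as recalled in the proof of Theorem 1, the factor $\varepsilon$ produced when differentiating $A_0(\varepsilon\vc{W})$ in time cancels the $1/\varepsilon$ of the large operator, so $\partial_\tau A_0$ stays bounded and the energy estimate closes on a time independent of $\varepsilon$.

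Next I would match the initial-data conditions 1.(a)--(c). Writing $\vc{W}_0^0=(\vc{v}^0,\mathcal{B}^0,q^0)$ and $\vc{W}_0^1=(\vc{v}^1,\mathcal{B}^1,q^1)$, the assumed forms of the initial velocity, magnetic field and pressure are precisely $\vc{W}_0=\vc{W}_0^0+\varepsilon\vc{W}_0^1$ with both terms bounded in $H^s$, which is 1.(a) and 1.(c). For 1.(b), a short computation with the explicit $C_x,C_y$ shows that $C_x\partial_x\vc{W}+C_y\partial_y\vc{W}=0$ is equivalent to $\nabla_\perp\cdot\vc{v}_\perp=0$ together with $\nabla_\perp(\mathcal{B}_z+q)=0$, i.e. $\mathcal{B}_z+q=f(z)$ for some function of $z$ alone; imposed on $\vc{W}_0^0$ these are exactly conditions (\ref{eq:Theo2Initialdata}.1) and (\ref{eq:Theo2Initialdata}.2). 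Hence all hypotheses of Theorem 1 are in force.

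Applying Theorem 1 then gives a unique solution $\vc{W}(\tau,\vc{x},\varepsilon)$ of (\ref{eq:matricialSymetricform}) on an interval $[0,T]$ with $T$ independent of $\varepsilon$, together with the estimate $\|\vc{W}(\tau,\cdot,\varepsilon)-\vc{W}^0(\tau,\cdot)\|_{s-1}\le C\varepsilon$, where $\vc{W}^0$ solves the limit system (\ref{eq:limitsystemG}) built from $A_0(0)$ and $A_j(\vc{W}^0,0)$. Undoing the explicit, invertible scaling (\ref{eq:ScalingAssumption}) (in particular $t=\tfrac{a}{\varepsilon v_A}\tau$) transfers this to the full MHD system (\ref{eq:MHDsym}): the solution $(\vc{u},\vc{B},p)$ exists on the corresponding interval and the rescaled fields $(\vc{v},\mathcal{B},q)$ converge in $H^{s-1}$. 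Finally, since $\vc{W}^0\in K$, one writes $\vc{W}^0=\mathcal{M}(\vc{\omega})$ and left-multiplies (\ref{eq:limitsystemG}) by the annihilator $\mathcal{A}=\mathcal{M}^*$ as in section \ref{sec:reduced}: the first-order correction $\sum_j C_j\partial_{x_j}\vc{W}^1$ drops out and one is left with the closed, autonomous reduced system (\ref{eq:limitsystemForOmega}) for $\vc{\omega}$, whose explicit form --- the reduced MHD equations --- is spelled out in section \ref{sec:slowlimit}; that is the system the full MHD solution converges to. The work is thus essentially bookkeeping; the only genuinely delicate verification is 2.(e) (the scaling of pressure, hence of the density through the barotropic law), with secondary care needed to identify the kernel of the large operator with the constraints (\ref{eq:Theo2Initialdata}) and to place the problem in a function-space setting (periodic or whole space) for which the estimates underlying Theorem 1, \cite{Majda_84}, apply.
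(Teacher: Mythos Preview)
Your proposal is correct and follows essentially the same approach as the paper: verify that the scaled MHD system (\ref{eq:matricialSymetricform}) satisfies the structural hypotheses 2.(a)--(e) of Theorem~1, check that the constraints (\ref{eq:Theo2Initialdata}) on the zeroth-order data are precisely the kernel condition 1.(b), and then invoke Theorem~1. The paper's own proof is in fact a two-line sketch of exactly this strategy; your version simply fills in the bookkeeping (symmetry of the explicit matrices, positivity of $A_0$, the key observation that $A_0=A_0(\varepsilon\vc{W})$ via the pressure scaling, and the identification of $\ker\mathbb{L}$ with $\nabla_\perp\!\cdot\vc{v}_\perp=0$ and $\nabla_\perp(\mathcal{B}_z+q)=0$) that the paper leaves implicit.
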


\begin{proof} The conditions on the structure of the system given in theorem 1
are satisfied while the conditions (\ref{eq:Theo2Initialdata}) express the fact that 
the $0-$order initial data is in the kernel of the large operator. The assumptions of theorem
1 are then fulfilled and the result follows.
\end{proof}

\subsubsection{Slow limit of the system}\label{sec:slowlimit}
According to the general theory described in section \ref{general framework}, the solutions of 
(\ref{eq:MHDsym}) will be close to the solutions of the limit system of equations given by
\begin{equation}\label{eq:limitsystem}
\left\{\begin{array}{l}
A_0(0) \partial_\tau \vc{W^0} + A_j (\vc{W}^0, 0)\partial_{x_j} \vc{W^0}+  
C_j\partial_{x_j} \vc{W^1} =0  \\
C_j\partial_{x_j} \vc{W^0}=0
\end{array}\right.
\end{equation} 
The zero-order solutions are 
functions $\vc{W}^0=(\vc{v}_x,\vc{v}_y,\vc{v}_z,
{\mathcal{B}}_x,{\mathcal{B}}_y,{\mathcal{B}}_z,q)$ that are in the kernel of 
the large operator. These functions must therefore verify :
\begin{displaymath}\label{kernel}
 \refstepcounter{equation}
\begin{array}{ll}
 \nabla_\perp \cdot\vc{v}_\perp = 0 &  ~~(\theequation.1)\\
 \nabla_\perp (q + {\mathcal{B}}_z) = 0 &  ~~(\theequation.2)\\
 \end{array}
\end{displaymath}
using these results, the explicit form of system (\ref{eq:limitsystem}) can be written 
\begin{displaymath}\label{scaledMHDlimit}
 \refstepcounter{equation}
\begin{array}{ll}
\rho (\bar{p})[\dudx{}{\tau}\vc{v}_z + (\vc{v}_\perp\cdot\nabla_\perp) \vc{v}_z ]
+ \partial_z q + ({\mathcal{B}}_\perp\cdot\nabla_\perp) q 
=0 & (\theequation.1)\\\\
\dudx{}{\tau}{{\mathcal{B}}_\perp} + (\vc{v}_\perp\cdot\nabla_\perp){\mathcal{B}}_\perp -
({\mathcal{B}}_\perp\cdot\nabla_\perp)\vc{v}_\perp - 
\partial_z \vc{v}_\perp = 0
& (\theequation.2)\\
\\
\rho (\bar{p})[\dudx{}{\tau}\vc{v}_\perp + (\vc{v}_\perp\cdot\nabla_\perp) \vc{v}_\perp ]
+ \nabla_\perp{\mathcal{B}}^2/2 
 -\partial_z {\mathcal{B}}_\perp - ({\mathcal{B}}_\perp\cdot\nabla_\perp){\mathcal{B}}_\perp\\
+ \colorbox{grispale}{$\nabla_\perp (q^1+{\mathcal{B}}^1_z)$}
=0 & (\theequation.3)\\\\
\dudx{}{\tau}{{\mathcal{B}}_z} + (\vc{v}_\perp\cdot\nabla_\perp){\mathcal{B}}_z
-
({\mathcal{B}}_\perp\cdot\nabla_\perp)\vc{v}_z  
+ \colorbox{grispale}{$
\nabla_\perp\cdot \vc{v}^1_\perp$ }= 0 & (\theequation.4)
\\\\
\dfrac{1}{\gamma\bar{p}}
[\dudx{}{\tau}{q} +(\vc{v}_\perp.\nabla_\perp) q ]
+ \partial_z \vc{v}_z 
+ \colorbox{grispale}{$\nabla_{\!\!\perp} \cdot\vc{v}^1_\perp $}= 0& (\theequation.5)\\
\end{array}
\end{displaymath} 
where $\bar{p}$ and $\rho (\bar{p})$ are constant. Using the fact that by (\ref{kernel}.2) 
$q + {\mathcal{B}}_z = f(z)$ where $f(z)$ is an arbitrary function, equations (\ref{scaledMHDlimit}.4) and 
(\ref{scaledMHDlimit}.5) can be combined to eliminate the corrective term 
$\nabla_{\!\!\perp} \cdot\vc{v}^1_\perp $ resulting in : 
\begin{displaymath}\label{eq:pressureEqFinal}
(\dfrac{1}{\gamma\bar{p}}-1)
[\dudx{}{\tau}{q} +(\vc{v}_\perp.\nabla_\perp) q ]
+ ({\mathcal{B}}_\perp\cdot\nabla_\perp)\vc{v}_z + \partial_z \vc{v}_z= 0
\end{displaymath}
We also note that in the perpendicular momentum equation, the term $\nabla_\perp (q^1+{\mathcal{B}}^1_z)$ 
ensures that $\nabla_\perp \cdot\vc{v}_\perp = 0$, 
this term can therefore be combined with the $\nabla_\perp{\mathcal{B}}^2/2$ term with no 
change in the result. Introducing the notations 
\begin{displaymath}
 \DtPerp{\cdot} = \dudx{}{\tau}{\cdot} + (\vc{v}_\perp\cdot\nabla_\perp)\cdot
 ~~~~
  \gradPar{\cdot} = 
   ({\mathcal{B}}_\perp\cdot\nabla_\perp)\cdot   +  \partial_z \cdot
\end{displaymath}

we get the final limit system : 
\begin{displaymath}\label{scaledMHDlimitFinal}
 \refstepcounter{equation}
\begin{array}{ll}
\rho \DtPerp{\vc{v}_\perp}
- \gradPar{{\mathcal{B}}_\perp} 
+ \colorbox{grispale}{$\nabla_\perp \lambda$}
=0 & (\theequation.1)\\\\
\DtPerp{{\mathcal{B}}_\perp}
- \gradPar{\vc{v}_\perp}  = 0
& (\theequation.2)\\
\\
\rho \DtPerp{\vc{v}_z} + \gradPar{q} 
=0 & (\theequation.3)\\\\
(\dfrac{1}{\gamma\bar{p}}-1)
\DtPerp{q}
+\gradPar{\vc{v}_z} = 0
& (\theequation.4)\\\\
\end{array}
\end{displaymath} 
where $\lambda$ stands here for a scalar ``pressure'' that ensures that the perpendicular 
divergence of the perpendicular velocity is zero. \\

The equations (\ref{scaledMHDlimitFinal}) shows that the limit system splits into two different 
sub-systems : 
\begin{itemize}
 \item (\ref{scaledMHDlimitFinal}.1) and (\ref{scaledMHDlimitFinal}.2) as well as the 
 constraint 
 (\ref{kernel}.1) describe the 
{\em incompressible} dynamics of the perpendicular motion of the plasma. 
This set of equation does 
not depend on the pressure and toroidal velocity equations and can be solved independently
of the other two equations. 
 \item On the other hand, the two scalar equations (\ref{scaledMHDlimitFinal}.3) and 
 (\ref{scaledMHDlimitFinal}.4) 
 describe the {\em compressible} parallel dynamics of the plasma. Actually, without the perpendicular 
 convective terms, these two equations describe a compressible one dimensional flow in the parallel 
 direction to the magnetic field. Note that these equations are 
 ``slave'' of the first two ones since both the perpendicular advection and the $\nabla_{/\!/}$ operator 
 depend only on the solution of equations (\ref{scaledMHDlimitFinal}.1) and (\ref{scaledMHDlimitFinal}.2). 
 Thus (\ref{scaledMHDlimitFinal}.3) and 
 (\ref{scaledMHDlimitFinal}.4) can be solved once the solutions of 
 (\ref{scaledMHDlimitFinal}.1) and (\ref{scaledMHDlimitFinal}.2) have been computed. 
\end{itemize}

As in the original MHD system, the system 
(\ref{scaledMHDlimitFinal}) is endowed with an involution : Using  
that $\nabla.\vc{B}=0$,  we have in the limit $\varepsilon \rightarrow 0$
that the perpendicular divergence of the magnetic field is zero,
\begin{displaymath}\label{eq:perpDiv}
\nabla_\perp \cdot {\mathcal{B}}_\perp = 0
\end{displaymath}
if this property is true for the initial data, it is conserved by system 
(\ref{scaledMHDlimitFinal}) :  
\newtheorem{PerpDiv}{Proposition}
\begin{PerpDiv}
 Assume that the perpendicular divergence of the perpendicular magnetic field is zero at time 
 $t=0$ : 
 $~~\nabla_\perp .{\mathcal{B}}_\perp(\vc{x},t=0) = 0 $ 
then $\nabla_\perp .{\mathcal{B}}_\perp(\vc{x},t) = 0$ for $t>0$.
\end{PerpDiv}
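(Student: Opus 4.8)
The plan is to show that $\partial_\tau(\nabla_\perp\cdot{\mathcal{B}}_\perp) = 0$ whenever $\nabla_\perp\cdot{\mathcal{B}}_\perp = 0$, so that the quantity $\nabla_\perp\cdot{\mathcal{B}}_\perp$ stays identically zero for all time if it vanishes initially. Concretely, I would apply the operator $\nabla_\perp\cdot$ to the evolution equation (\ref{scaledMHDlimitFinal}.2), namely $\DtPerp{{\mathcal{B}}_\perp} = \gradPar{\vc{v}_\perp}$, and expand everything in terms of the two-dimensional perpendicular operators. Since $\nabla_\perp$ acts only in the poloidal plane and commutes with $\partial_\tau$ and with $\partial_z$, the left side becomes $\partial_\tau(\nabla_\perp\cdot{\mathcal{B}}_\perp) + \nabla_\perp\cdot\bigl((\vc{v}_\perp\cdot\nabla_\perp){\mathcal{B}}_\perp\bigr)$, and the right side becomes $\nabla_\perp\cdot\bigl(({\mathcal{B}}_\perp\cdot\nabla_\perp)\vc{v}_\perp\bigr) + \partial_z(\nabla_\perp\cdot\vc{v}_\perp)$.

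The key algebraic identity I would use is the two-dimensional vector-calculus fact that, for any two planar fields $\vc{a},\vc{b}$,
\begin{displaymath}
\nabla_\perp\cdot\bigl((\vc{a}\cdot\nabla_\perp)\vc{b}\bigr) - \nabla_\perp\cdot\bigl((\vc{b}\cdot\nabla_\perp)\vc{a}\bigr)
= (\vc{a}\cdot\nabla_\perp)(\nabla_\perp\cdot\vc{b}) - (\vc{b}\cdot\nabla_\perp)(\nabla_\perp\cdot\vc{a}),
\end{displaymath}
which one checks by writing out components (the second-derivative cross terms cancel). Applying this with $\vc{a}=\vc{v}_\perp$, $\vc{b}={\mathcal{B}}_\perp$, the difference of the two advective contributions reduces to $(\vc{v}_\perp\cdot\nabla_\perp)(\nabla_\perp\cdot{\mathcal{B}}_\perp) - ({\mathcal{B}}_\perp\cdot\nabla_\perp)(\nabla_\perp\cdot\vc{v}_\perp)$. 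Now I invoke the two constraints available in the limit system: the incompressibility constraint (\ref{kernel}.1), $\nabla_\perp\cdot\vc{v}_\perp = 0$, which kills both the $\partial_z(\nabla_\perp\cdot\vc{v}_\perp)$ term and the $({\mathcal{B}}_\perp\cdot\nabla_\perp)(\nabla_\perp\cdot\vc{v}_\perp)$ term. What remains is the transport equation
\begin{displaymath}
\partial_\tau(\nabla_\perp\cdot{\mathcal{B}}_\perp) + (\vc{v}_\perp\cdot\nabla_\perp)(\nabla_\perp\cdot{\mathcal{B}}_\perp) = 0,
\end{displaymath}
i.e. $\DtPerp{(\nabla_\perp\cdot{\mathcal{B}}_\perp)} = 0$. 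Hence $\nabla_\perp\cdot{\mathcal{B}}_\perp$ is constant along the poloidal flow lines of $\vc{v}_\perp$, and since it is zero at $\tau=0$ it stays zero for all $\tau>0$.

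The only delicate point — and the step I would treat most carefully — is the bookkeeping of exactly which cross-derivative terms cancel in the planar identity above and making sure that the $\partial_z$ term really drops out; it does, but only because it is $\partial_z$ of $\nabla_\perp\cdot\vc{v}_\perp$, which is itself zero by the kernel constraint, rather than for any structural reason internal to (\ref{scaledMHDlimitFinal}.2). I would also remark that this is the exact $\varepsilon\to0$ counterpart of the $\nabla\cdot\vc{B}=0$ involution of full MHD: taking the divergence of the induction equation in (\ref{eq:MHDsym}.2) gives $\partial_t(\nabla\cdot\vc{B}) + \nabla\cdot\bigl((\vc{u}\cdot\nabla)\vc{B} - (\vc{B}\cdot\nabla)\vc{u} + \vc{B}\,\nabla\cdot\vc{u}\bigr) = 0$, which collapses to $D(\nabla\cdot\vc{B})/Dt + (\nabla\cdot\vc{B})(\nabla\cdot\vc{u}) = 0$, a homogeneous transport equation for $\nabla\cdot\vc{B}$; the reduced computation is the projection of this onto the poloidal plane, with $\nabla_\perp\cdot\vc{v}_\perp=0$ playing the role that makes the analogue of the $(\nabla\cdot\vc{B})(\nabla\cdot\vc{u})$ term vanish as well.
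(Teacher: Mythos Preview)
Your argument is correct and follows exactly the route the paper indicates: apply $\nabla_\perp\cdot$ to the perpendicular Faraday law (\ref{scaledMHDlimitFinal}.2) and use the kernel constraint $\nabla_\perp\cdot\vc{v}_\perp=0$. The paper's proof is a one-line remark to this effect; your version simply spells out the bookkeeping (the cancellation of the symmetric cross-derivative terms $(\partial_i v_j)(\partial_j {\mathcal{B}}_i)=(\partial_i {\mathcal{B}}_j)(\partial_j v_i)$) and lands on the clean transport equation $\DtPerp{(\nabla_\perp\cdot{\mathcal{B}}_\perp)}=0$, which is a slightly sharper formulation than what the paper states but the same content.
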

\begin{proof}
 This follows directly by applying the perpendicular divergence operator to 
 the perpendicular Faraday law (\ref{scaledMHDlimitFinal}.2). Note that to obtain this result, 
 both the properties $\nabla_\perp .{\mathcal{B}}_\perp =0$ and $\nabla_\perp .\vc{v}_\perp =0$
 are used. 
\end{proof}

Although, equations (\ref{scaledMHDlimitFinal}.1) and (\ref{scaledMHDlimitFinal}.2) have 
a similar structure, we note that $\nabla_\perp \cdot\vc{v}_\perp = 0$ {\em is not} an 
involution for the system : equation (\ref{scaledMHDlimitFinal}.1) does not conserve the 
perpendicular divergence of $\vc{v}_\perp$, the corrective term $\nabla_\perp \lambda$ is 
therefore needed to insure that $\nabla_\perp \cdot\vc{v}_\perp = 0$. \\

We will now from the limit system (\ref{scaledMHDlimitFinal}) obtain a {\em reduced} system 
characterized by a smaller number of equation than the number of the original system. As explained in 
section \ref{sec:reduced}, this can be obtained by canceling out the corrective term. Since equations 
(\ref{scaledMHDlimitFinal}.1) and (\ref{scaledMHDlimitFinal}.2) form an autonomous system, we 
concentrate on these two equations. According to the general procedure sketched in 
section \ref{sec:reduced}, we look for a parametrization of the function space where the 
solution belongs to. In the present case, the space 
$K=\{ (\vc{v}_\perp,{\mathcal{B}}_\perp); 
\nabla_\perp \cdot\vc{v}_\perp =\nabla_\perp\cdot{\mathcal{B}}_\perp = 0\} $ can be 
parametrized  by 2 scalar functions $\phi, \psi$ such that 
\begin{displaymath}\label{eq:parametrization}
 \refstepcounter{equation}
\begin{array}{ll}
 \vc{v}_\perp = \vecz \times \nabla \phi &  ~~(\theequation.1)\\
 {\mathcal{B}}_\perp = \vecz \times \nabla \psi &  ~~(\theequation.2)
 \end{array}
\end{displaymath}
Let us define for any scalar function $F \in H^1$ the operator ${\cal{M}}$ with values in 
$L^2 \times L^2$ by :
$${\cal{M}}(F)= \vecz \times \nabla F 
$$
The following Green formula :
$$\int_\Omega \vecz \times \nabla F \cdot \vc{W} d\vc{x} = \int_{\partial\Omega} F \vecz \times \vc{W} \cdot 
\vc{n} ds - \int_\Omega F \vecz \cdot \nabla \times \vc{W} d\vc{x}
$$
shows that the adjoint operator of ${\cal{M}}$ is defined by : 
\begin{displaymath}
 {\cal{M}}^*(\vc{W}) =\vecz \cdot\nabla \times\vc{W}
\end{displaymath} 
Using the general recipe given in section \ref{sec:reduced}, we get a reduced system for the variables 
$\phi, \psi$ by :

\begin{displaymath}
\label{eq:reducedModel1} \refstepcounter{equation}
\begin{array}{ll}
\rho {\cal{M}}^*\DtPerp{{\cal{M}}(\phi)}
- {\cal{M}}^*\gradPar{{\cal{M}}(\psi)} 
=0 & (\theequation.1)\\\\
{\cal{M}}^*\DtPerp{{\cal{M}}(\psi)}
- {\cal{M}}^*\gradPar{{\cal{M}}(\phi)}  = 0
& (\theequation.2)\\
\end{array} 
 \end{displaymath}
 where the corrective term $\nabla_\perp \lambda$ have been canceled out by the annhilator operator 
${\cal{M}}^*$. After some algebra, this system admits the following expression :  
\begin{displaymath}\label{eq:MHDlimitReduced2}
 \refstepcounter{equation}
\begin{array}{ll}
\rho \DtPerp{{\cal{U}}}
- \gradPar{J} 
=0 & (\theequation.1)\\\\
\partial_\tau {J}
- \nabla_\perp^2(\partial_x \phi \partial_y \psi -\partial_x \psi \partial_y \phi )
-\partial_z{{\cal{U}}}  = 0
& (\theequation.2)\\
\end{array}
\end{displaymath} 
 where ${\cal{U}}$ and $J$ are defined as ${\cal{U}} = -  \nabla_\perp^2 \phi$ and 
 $J = - \nabla_\perp^2 \psi$. We note that ${\cal{U}}= \partial_y \vc{v}_x - \partial_x \vc{v}_y$
 represent the $z-$component of the curl of the velocity vector, therefore 
 in reduced MHD literature, ${\cal{U}}$ is defined as the {\em vorticity} and 
 (\ref{eq:MHDlimitReduced2}.1) 
 is called the vorticity equation by analogy with the fluid dynamics case.\\
 
 From a physical point of view, the quantity $J = - \nabla_\perp^2 \psi$ corresponds to the 
 toroidal current traversing the plasma column and therefore 
 equation (\ref{eq:MHDlimitReduced2}.2) defines the behavior of the toroidal current. 
 In the framework of reduced MHD model, this equation is not used. 
 Instead, rewriting (\ref{eq:MHDlimitReduced2}.2) as 
 
 \begin{displaymath}\label{eq:MHDlimitReduced3}
 \refstepcounter{equation}
 - \nabla_\perp^2 [\partial_\tau {\psi}
+ (\partial_x \phi \partial_y \psi -\partial_x \psi \partial_y \phi )
-\partial_z \phi ] = 0~~~~~~~~~~~~~~(\theequation)
 \end{displaymath}
 and noting that $(\partial_x \phi \partial_y \psi -\partial_x \psi \partial_y \phi )$ 
 corresponds to the advection term $\vc{v}_\perp\cdot\nabla_\perp \psi$, 
 one prefers to use the equation :
\begin{equation}\label{eq:Faradayf}
\frac{\partial }{\partial \tau} \psi
+ \vc{v}_\perp\cdot\nabla_\perp \psi - \frac{\partial}{\partial z} \phi = 0
\end{equation}
Strictly speaking (\ref{eq:Faradayf}) cannot be deduced directly from (
\ref{eq:MHDlimitReduced3}) and integration factors should have appeared in 
(\ref{eq:Faradayf}). However, it is possible to establish directly (\ref{eq:Faradayf}). 
This is done in Annex 1.\\ 

To complete the description of the reduced MHD models, we mention that in the present model, it is 
not necessary to solve the toroidal and pressure equations 
(\ref{scaledMHDlimitFinal}.3) and (\ref{scaledMHDlimitFinal}.4) since the dynamics is entirely governed 
by (\ref{scaledMHDlimitFinal}.1) and (\ref{scaledMHDlimitFinal}.2). Neglecting these equations, is also 
sometimes justified as follows (\cite{Strauss77}) :  
The acceleration term of the 
toroidal momentum equation (\ref{scaledMHDlimitFinal}.3) is  :
$$
\nabla_{/\!/} q = {\mathcal{B}}_\perp\cdot\nabla_\perp q 
  +{\frac {\partial }{\partial z}}q  =  \vc{B}\cdot \nabla q
$$
Then it can be shown (see \cite{Strauss77}), that if at time $t=0, \vc{B}\cdot \nabla q =0$, then this quantity
will stay equal to zero. Therefore, the toroidal acceleration is null and if initially 
$\vc{v}_z = 0 $, then the toroidal velocity will remain zero. Consequently, the velocity source 
$\nabla_{/\!/} \vc{v}_z$ in the pressure equation remains zero and the pressure correction $q$ 
behaves as a passive scalar. \\ In the framework of MHD 
studies in tokamaks, the assumption $\vc{B}\cdot \nabla q = 0 $ is very natural since the 
flows under investigation are close to an MHD equilibrium characterized by : 
\begin{equation}\label{eq:equilibrium}
\nabla p = \vc{J} \times \vc{B}
\end{equation}
that implies that $\vc{B}\cdot \nabla p = 0 $.  
Actually, a lot of MHD studies aims to examine the linear or non-linear 
stability of such equlibrium and therefore these works 
use precisely the relation (\ref{eq:equilibrium}) to define the initial conditions. 

Summarizing, the dynamics of the MHD model can be reduced to a system 
of 2 equations for the scalar quantities $(\phi, \psi ) $

\begin{displaymath}
\label{eq:reducedModel2}
 \refstepcounter{equation}
 \left[ \begin {array}{cl} 
\rho \DtPerp{{\cal{U}}}
- \gradPar{J} 
=0 & (\theequation.1)\\\\
\noalign{\medskip}
\DtPerp{\psi } - \frac{\partial}{\partial z} \phi = 0
& ~~(\theequation.2)\\\\
  \end {array} \right] 
\end{displaymath}
with 
$$
{\cal{U}} = - \nabla_\perp^2 \phi~~~~~~~~~~~~~~J = - \nabla_\perp^2 \psi
$$
These equations are conventionally written in a somewhat different form emphazing 
their hamiltonian character \cite{Morrison-Hazeltine84}. Introducing the bracket
$$
[f,g] = \vecz\cdot \nabla_\perp f \times \nabla_\perp g 
$$
we have that for any $f$
$$
\vc{v}_\perp\cdot \nabla_\perp f = [\phi,f] ~~~~\mathrm{while} ~~~~
\vc{{\cal{B}}}_\perp\cdot \nabla_\perp f = [\psi,f]$$
and the previous system can be written as 
\begin{displaymath}
\label{eq:reducedModel3}
 \refstepcounter{equation}
 \left[ \begin {array}{ll} 
 \frac{\partial }{\partial \tau} {\cal{U}}
+ [\phi,{\cal{U}}] - [\psi,J] - \frac{\partial}{\partial z} J = 0
& ~~(\theequation.1)\\\\
\noalign{\medskip}
\frac{\partial }{\partial \tau} \psi
+ [\phi,\psi] - \frac{\partial}{\partial z} \phi = 0
& ~~(\theequation.2)\\\\
  \end {array} \right] 
\end{displaymath}
where we have assumed $\rho = 1$ using an appropriate scaling of the density. 
\subsubsection{Fast modes of the system}\label{sec:fastlimit}
In this section, we briefly comment on the solutions of the full MHD system that are eliminated by the 
reduced model. In other term, we analyze the short time behavior of system (\ref{eq:matricialSymetricform}). 
Considering 
the fast time scale $\tau' = \varepsilon \tau$ or in an equivalent manner the fast reference 
time $t'= \dfrac{a}{ v_A}$, 
it is seen that the system (\ref{eq:matricialSymetricform}) reduces to the linear hyperbolic system 
\begin{equation}\label{eq:fasttime scale}
A_0(0) \partial_{\tau'} \vc{W} + \sum_j C_j\partial_{x_j} \vc{W} =0 
\end{equation} 
Let $\vc{n}=(\vc{n}_x,\vc{n}_y)^t$ be a 2D unit vector in the poloidal plane, the matrix 
$A_0(0)^{-1} (\vc{n}_x C_x + \vc{n}_y C_y)$ is diagonalizable and its eigenvalues are : 
\begin{equation}\label{eq:eigenvalue}
 \lambda_0 = 0 \text{~(with~multiplicity 6)},~
 \lambda_+ = \sqrt{\frac{\gamma \bar{p} + 1}{\rho}},~
 \lambda_- = -\sqrt{\frac{\gamma \bar{p} + 1}{\rho}}
\end{equation}
or in term of non-normalized variables : 
\begin{equation}\label{eq:eigenvaluephysical}
 \lambda_0 = 0 \text{~(with~multiplicity 6)},~
 \lambda_+ = \sqrt{\frac{\gamma p + B_0^2}{\rho}},~
 \lambda_- = -\sqrt{\frac{\gamma p + B_0^2}{\rho}}
\end{equation}
Comparing these expression to (\ref{eq:MHDEigenvalues}), it is readily be seen that the non-zero 
eigensolutions correspond to fast magnetosonic waves traveling in the direction 
perpendicular to the toroidal magnetic field $B_0 \vecz$. The situation here is quite similar to 
the one encountered with the compressible Euler equation where the fast limit 
corresponds to the acoustic equations describing the propagation of acoustic waves. 
Here, however, 
we also have an additional splitting in term of space directions. The fast limit of the system 
describes the propagation of magnetosonic waves {\em{in the poloidal plane}} while waves traveling in the toroidal 
direction are not present in this limit. The slow limit of the system that have been examined in 
section \ref{sec:slowlimit} thus 
excludes perpendicular magnetosonic waves in the same way as acoustic waves are filtered out from 
the compressible Euler equation when one consider the incompressible limit equation. 
\begin{figure}[h]
\begin{center}
 \includegraphics[width=0.75\textwidth]{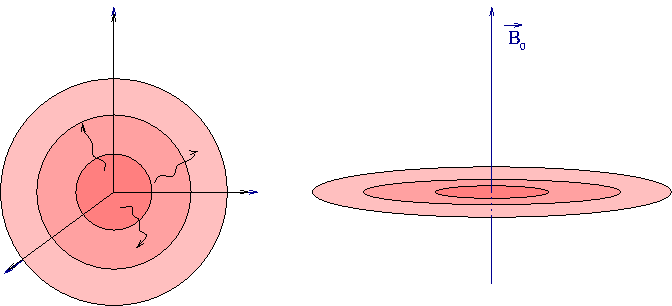}
\caption{Comparison of the fast modes between the low Mach number limit and reduced MHD model; 
Left, Low Mach number limit : 3D isentropic propagation of acoustic waves; 
Right, reduced MHD models : 
2D propagation of fast magnetosonic waves in the poloidal plane.\label{fig:fastmodes}}
\end{center}
\end{figure}

From a numerical point of view, this is one of the main advantage of reduced MHD 
since the use of the full MHD system (\ref{eq:MHDsym}) implies strong CFL stability requirement 
linked to the propagation of magnetosonic waves. Note however that this splitting of the waves is not 
due to differences in the {\em speed of propagation} as for the Euler equation but 
rather to the different {\em space scales} in the toroidal and poloidal directions.
In a toroidal system as a tokamak is, gradients in the toroidal directions are small 
with respect to gradient in the perpendicular directions and it is this fact that 
produce the wave separation rather than their speed of propagation since the velocities of 
Alfen and magnetosonic waves are roughly of the same order of magnitude. 
\section{Concluding remarks}
This work has shown that the derivation of reduced MHD models for fusion plasma can be 
formulated in the general framework of the singular limit of hyperbolic system of PDEs
with large operator. This allows to use the results of this theory and to prove 
rigorously the validity of these approximations. In particular, 
it is proven, that the solutions of the full MHD system converge to the solutions of the 
reduced model displayed in section \ref{sec:slowlimit}.\\
 
\noindent This work can be extended in several different directions.\\

First, the reduced MHD model considered in this paper is the simplest of a whole 
hierarchy of models of increasing complexity. The model used in the present work 
has at least two important weaknesses : \\
a) It uses the straight tokamak model and 
therefore curvature terms are absent from the resulting equations. 
More elaborated models \cite{Carreras-1981,Izzo-etal-1983} retaining curvature effects and high order terms in 
$\varepsilon$ are available and can be possibly 
analyzed within the present framework.\\
b) Another weakness of the model is that it uses as small parameter the ratio $a/R_0$ 
that cannot be considered as small in a large number of today's machines. 
More elaborated models denoted in several references as ``generalized reduced MHD models'' 
\cite{Hazeltine19851,Kruger_etal98,Zeiler_etal1997,Simakov_Catto2004} have been derived. These models do not make use 
of the small aspect ratio hypothesis and thus are in principle applicable with no restriction 
on the geometry. 
However, even from the point of 
view of formal asymptotics, these models are not always easy to understand and 
contains ad-hoc assumptions that are difficult to justify rigorously.  
It would be extremely valuable to study the possibility to formulate these ``generalized 
reduced'' MHD models along the lines exposed in this work. \\

In the terminology of  \cite{S_Schochet_1994}, the present work has examined the 
slow singular limit of the MHD equations. A second possible and interesting sequel of 
this work would be to examine the fast singular limit where no assumption is made on the 
boundedness of the initial time derivatives. On physical grounds, the assumption underlying the 
use of reduced MHD models is that fast transverse magnetosonic waves do not affect the dynamics 
on the long time scale in the same way as in fluid dynamics, the propagation of acoustic waves 
do not modify the average incompressible background. For the Euler (or Navier-Stokes) equations 
this can be proven for certain cases e.g \cite{S_Schochet_1994,Alazard_2008}. Such a result 
however appears significantly more difficult to obtain for the MHD equations since their degree of 
non-linearity is higher than in the fluid dynamics case. Note however, that the formal asymptotic 
expansion used in \cite{Kruger_etal98} can be considered as a first step in this direction. 
\clearpage
\begin{center}
 {\bf Acknowledgments}
\end{center}
This work has benefited from numerous discussion with Philippe Gendhrih and 
Patrick Tamain of IRFM-CEA on the drift approximation in plasma physics. 
Many thanks also to Guido Huysmans of ITER.org for answering my 
(too many and too naive) 
questions on reduced MHD models and finally a special appreciation for Paolo Ricci of 
EPFL-Lausanne for his seminar of April 2014 at the university of Nice and the long 
discussion that follows that helps me a lot to understand reduced MHD models. \\

This work has been partly carried out within the framework of the EUROfusion Consortium and 
has received funding from the Euratom research and training programme 2014-2018 under 
grant agreement No 633053. The views and opinions expressed herein do not necessarily 
reflect those of the European Commission.
\clearpage
\appendix
\section{Annex 1}
In this section, we give a direct obtention of Equation (\ref{eq:Faradayf}). 
Since $\vc{B}$ is a divergence free vector field, there exists a vector potential $\vc{A}$  such that $\nabla \times \vc{A} = \vc{B}$. 
From the expression (\ref{eq:OperatorExpression}.3) of the curl operator and the expression of the 
magnetic field, it is seen that $\psi$ is 
the toroidal component of this vector potential. 
In term of vector potential $\vc{A}$, Faraday's law writes : 
\begin{displaymath}
\frac{\partial }{\partial t} \vc{A} + \vc{E} =  \nabla \phi 
\end{displaymath}
where $\vc{E}$ is the electric field and $\phi$ is  
the electric potential\footnote{Note that the sign convention to define the electric field can be the opposite depending on the authors}\\
Taking the scalar product of this equation by $\vecz$, one has 
\begin{displaymath}
\frac{\partial }{\partial t} \psi + \vecz\cdot \vc{E} -  \partial_z \phi = 0
\end{displaymath}
Now, using Ohm's law 
$\vc{E} + \vc{v} \times \vc{B} = 0$ and the identity 
$$ - \vecz\cdot (\vc{v} \times \vc{B}) = \vc{v}\cdot(\vecz \times \vc{B} )$$
one obtains (\ref{eq:Faradayf}) : 
\begin{displaymath}
 \frac{\partial }{\partial \tau} \psi
+ \vc{v}_\perp\cdot\nabla_\perp \psi - \frac{\partial}{\partial z} \phi = 0
\end{displaymath}
Note that since $\vc{v}_\perp\cdot\nabla_\perp \psi = - {\mathcal{B}}_\perp\cdot\nabla_\perp \phi$, this equation can also be written 
\begin{displaymath}
 \frac{\partial }{\partial \tau} \psi - \nabla_{/\!/} \phi = 0
\end{displaymath}
From a physical point of view, this interpretation shows that the velocity defined 
by (\ref{eq:parametrization}.1) 
is the so-called electric drift $\vec{v}_E=\vc{E}\times \vc{B}/|B^2|$ 
Indeed it can be shown (see \cite{Strauss77} for instance) that the reduced MHD approximation implies that the transverse electric field is 
electrostatic : $$\vc{E}_\perp = \nabla_\perp \phi$$
from which one can deduce by taking the cross product of Ohm's law by $\vc{B}$ the expression (\ref{eq:parametrization}.2) since in the small aspect ratio theory, 
the parallel and toroidal direction are identical up to terms of order $\varepsilon$. 
\clearpage

\bibliographystyle{siam}
\bibliography{reducedMHD,low_mach}

\end{document}